\newsavebox{\largestimage}
\tikzstyle{every picture}=[line join=round,line cap=round,every label/.append style={font=\small},label distance=-1pt,line width=.8pt]
\definecolor{darkgreen}{RGB}{50,180,50}
\renewcommand{\email}[2][]{%
  \ifx\emails\@empty\relax\else{\g@addto@macro\emails{,\space}}\fi%
  \@ifnotempty{#1}{\g@addto@macro\emails{\textrm{(#1)}\space}}%
  \g@addto@macro\emails{#2}%
}
\newtheorem*{rep@theorem}{\rep@title}
\newcommand{\newreptheorem}[2]{%
\newenvironment{rep#1}[1]{%
 \def\rep@title{#2 \ref{##1}}%
 \begin{rep@theorem}}%
 {\end{rep@theorem}}}
\newtheorem{theorem}{Theorem}
\newtheorem{conjecture}[theorem]{Conjecture}
\newtheorem{observation}[theorem]{Observation}
\theoremstyle{remark}
\newcommand{\R}{\mathbb{R}}
\renewcommand{\le}{\leqslant}
\renewcommand{\leq}{\leqslant}
\renewcommand{\geq}{\geqslant}
\renewcommand{\preceq}{\preccurlyeq}
\DeclareMathOperator{\xc}{\mathsf{xc}}
\DeclareMathOperator{\COR}{\mathsf{COR}}
\DeclareMathOperator{\STAB}{\mathsf{STAB}}
\DeclareMathOperator{\tw}{\mathsf{tw}}
\DeclareMathOperator{\LP}{\mathsf{LP}}
\newcommand{\set}[2]{\bigl\{\,#1 \bigm\vert#2\,\bigr\}}
\begin{document}

\title{Extension Complexity of the Correlation Polytope}

\author[P.~Aboulker]{Pierre Aboulker\textsuperscript{1}}
\email[1]{pierreaboulker@gmail.com}
\address[1]{Laboratoire G-SCOP, CNRS, Universit\'e Grenoble Alpes}

\author[S.~Fiorini]{Samuel Fiorini\textsuperscript{2}}
\address[2,3,4]{D\'epartement de Math\'ematique, Universit\'e libre de Bruxelles}
\email[2]{sfiorini@ulb.ac.be}  
\author[T.~Huynh]{Tony Huynh\textsuperscript{3}}
\email[3]{tony.bourbaki@gmail.com}
\author[M.~Macchia]{Marco Macchia\textsuperscript{4}}
\email[4]{mmacchia@ulb.ac.be}
\author[J.~Seif]{Johanna Seif\textsuperscript{5}}
\address[5]{D\'epartement d'Informatique, Ecole Normale Sup\'erieure de Lyon}
\email[5]{johanna.seif@ens-lyon.fr}

\begin{abstract}
We prove that for every $n$-vertex graph $G$, the extension complexity of the correlation polytope of $G$ is $2^{O(\tw(G) + \log n)}$, where $\tw(G)$ is the treewidth of $G$. Our main result is that this bound is tight for graphs contained in minor-closed classes.  
\end{abstract}


\maketitle


\section{Introduction}
 All graphs in this paper are undirected and simple. Let $G=(V,E)$ be a graph.  The \emph{correlation polytope} of $G$, denoted $\COR(G)$, is the convex hull of the characteristic vectors of induced subgraphs of $G$. More precisely, $\COR(G)$ is the polytope in $\mathbb{R}^{V \cup E}$ which is the convex hull of all vectors of the form $(\chi(X), \chi(E(X)))$ for $X \subseteq V$, where $E(X)$ is the set of edges with both endpoints in $X$, and $\chi$ denotes the characteristic vector of a set. It is closely related to the \emph{stable set polytope} of $G$, denoted $\STAB(G)$, which is the convex hull of characteristic vectors of stable sets of $G$.
 There are other equivalent definitions of the correlation polytope, and it arises naturally in many different contexts, including probability theory, propositional logic, and quantum mechanics~\cite{Pitowsky91}.  
 
 More recently, the correlation polytope has also acquired greater prominence in machine learning, where it is more commonly referred to as the \emph{marginal polytope}. For example, Wainright and Jordan~\cite{WJ08} showed that the \emph{maximum a posteriori} (\textsf{MAP}) inference problem for graphical models~\cite{LPC12,DDA14,EDM17, RW17} is equivalent to maximizing a linear function over the correlation polytope.  This motivates the search for \emph{compact} descriptions of the correlation polytope.  The proper framework 
 for addressing such questions is the theory of \emph{extended formulations}~\cite{Kaibel11,CGZ13, CCZ14,Roughgarden15,Weltge15}.

A polytope $Q \subseteq \R^p$ is an \emph{extension} of a polytope $P \subseteq \R^d$ if there exists an affine map $\pi: \R^p \rightarrow \R^d$ with $\pi(Q) = P$. The \emph{extension complexity} of $P$, denoted $\xc(P)$, is the minimum number of facets of any extension of $P$. If $Q$ is an extension of $P$ such that $Q$ has fewer facets than $P$, then it can be advantageous to optimize over $Q$ instead of $P$.

Fiorini, Massar, Pokutta, Tiwary and de Wolf~\cite{fiorini2011exponential} were the first to show that many polytopes associated to classic $\mathsf{NP}$-hard problems, (including the correlation polytope of the complete graph) have exponential extension complexity. Their results do not rely on any complexity assumptions such as $\mathsf P \neq \mathsf{NP}$ or $\mathsf{NP} \not\subseteq \mathsf{P}/\mathsf{poly}$.  In this paper, we determine the extension complexity of the correlation polytope almost exactly. 

In order to state our main results, we need some graph theoretic definitions.  
A graph $H$ is a \emph{minor} of a graph $G$, denoted $H \preceq G$, if $H$ can be obtained from a subgraph of $G$ by contracting edges. A class $\mathcal{C}$ of graphs is \emph{minor-closed} if $G \in \mathcal{C}$ and $H \preceq G$ implies $H \in \mathcal{C}$. In Observation~\ref{obs:minormonotone}, we note that if $H \preceq G$, then $\xc(H) \leq \xc(G)$.  By the graph minor theorem of Robertson and Seymour~\cite{RS04}, the property $\xc(\COR (G)) \leq k$ is characterized by a finite set of forbidden minors.  For more on the connection between the correlation polytope and graph minors, see~\cite{Weller16, WRS16}.

A \emph{tree-decomposition} of a graph $G$ is a pair $(T, \mathcal{B})$
where $T$ is a tree and $\mathcal{B} \coloneqq \set{B_t }{ t \in V(T)}$ is a collection of subsets of vertices of $G$ satisfying:

\begin{itemize}
\item $V(G)= \bigcup_{t \in V(T)} B_t$, 
\item for each $uv \in E(G)$, there exists $t \in V(T)$ such that $u,v \in B_t$, and
\item for each $v \in V(G)$, the set of all $w \in V(T)$ such that $v \in B_w$ induces a subtree of $T$.  
\end{itemize}
The \emph{width} of the decomposition $(T, \mathcal{B})$ is $\max \set{|B_t|-1 }{ t \in V(T)}$.  The \emph{treewidth} of $G$, denoted $\tw(G)$, is the minimum width taken over all tree-decompositions of $G$.     

Wainright and Jordan~\cite{WJ04} proved that for all graphs $G$, the $(\tw(G)+1)$-th level of the \emph{Sherali-Adams hierarchy}~\cite{SA90} of the $\LP$ relaxation of $\COR(G)$ is equal to $\COR(G)$.  It follows that $\COR(G)$ has an extended formulation of size $n^{O(\tw(G))}$.  We now derive an improved upper bound using results of Kolman and Kouteck\'{y}~\cite{KK15} (see also Bienstock and Munoz~\cite{BM15}). 

\begin{theorem}\label{thm:upperbound}
For every $n$-vertex graph $G$, the extension complexity of $\COR(G)$ is $2^{O(\tw(G) + \log n)}$.
\end{theorem}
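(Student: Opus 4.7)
The plan is to build an explicit extended formulation that mirrors a tree decomposition of $G$. Fix a tree decomposition $(T, \mathcal{B})$ of width $w = \tw(G)$; by standard reductions we may assume $|V(T)| = O(n)$. For each bag $B_t$ and each subset $S \subseteq B_t$, introduce a nonnegative variable $\lambda_{t,S}$, intuitively encoding the probability that the random induced subgraph intersects $B_t$ in exactly $S$. The linking equations $x_v = \sum_{S \ni v} \lambda_{t,S}$ (for any $t$ with $v \in B_t$) and $y_{uv} = \sum_{S \supseteq \{u,v\}} \lambda_{t,S}$ (for any $t$ containing both endpoints of $uv$) tie the $\lambda$'s to the original variables; the tree-decomposition axiom on edges guarantees that every edge of $G$ is witnessed by at least one bag.

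The formulation is completed by normalization $\sum_{S \subseteq B_t} \lambda_{t,S} = 1$ for each $t$, and marginal consistency $\sum_{S : S \cap B_{t'} = R} \lambda_{t,S} = \sum_{S' : S' \cap B_t = R} \lambda_{t',S'}$ for each edge $tt' \in E(T)$ and each $R \subseteq B_t \cap B_{t'}$. Counting is routine: each bag contributes $2^{|B_t|} \le 2^{w+1}$ variables and $O(2^{w+1})$ constraints, and with $O(n)$ bags the total size is $O(n \cdot 2^{w+1}) = 2^{O(\tw(G)+\log n)}$, matching the claimed bound.

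The remaining task, and the main obstacle, is correctness. The easy direction is that each vertex $(\chi(X), \chi(E(X)))$ of $\COR(G)$ lifts by setting $\lambda_{t,S} = 1$ iff $S = X \cap B_t$. For the reverse direction, I would use a classical junction-tree argument: root $T$ arbitrarily and process the bags from the leaves, inductively combining the locally defined distributions $\lambda_{t,\cdot}$ into a coherent global probability distribution on $2^{V(G)}$, using the consistency constraints to glue along the overlaps $B_t \cap B_{\mathrm{parent}(t)}$. The fact that every edge of $G$ sits inside some bag is what makes the gluing step produce the correct edge marginals $y_{uv}$; on hypergraphs lacking this tree structure, marginal consistency would \emph{not} suffice to guarantee a joint distribution. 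Alternatively, correctness can be inherited from the Wainright--Jordan result cited above, with the present improvement over the $n^{O(\tw(G))}$ bound coming from using only $O(n)$ bags of size at most $w+1$ rather than all $\binom{n}{w+1}$ subsets of $V(G)$, as in the approach of Kolman and Kouteck\'y.
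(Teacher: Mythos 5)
Your construction is correct and is, in substance, the junction-tree/local-consistency extended formulation that underlies the Kolman--Koutecký theorem the paper invokes. The paper's proof is shorter only because it uses that theorem as a black box: it forms the constraint graph $G'$ of the system $x_{uv}=x_ux_v$ (that is, $G$ with a length-$2$ path added in parallel to each edge), observes that $\tw(G')=\max(\tw(G),2)$, and applies~\cite[Theorem~1]{KK15} to get an extension of size $(n+m)2^{\tw(G')}$. You instead reprove the relevant special case directly, which is more self-contained and arguably more transparent. The one step you should not leave implicit is the reverse direction of correctness: the assertion that marginally consistent local distributions $\{\lambda_{t,\cdot}\}$ on the bags of a tree decomposition always extend to a joint distribution on $2^{V(G)}$ is precisely the junction-tree theorem, and it is where both the tree structure and the axiom that every edge sits inside some bag are used; a complete write-up would carry out the leaf-to-root gluing (and handle zero-probability conditioning). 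With that spelled out, your proof is complete, and the size count $O(n\cdot 2^{\tw(G)+1})$ variables plus $O(n\cdot 2^{\tw(G)+1})$ constraints matches the claimed $2^{O(\tw(G)+\log n)}$.
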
 

\begin{proof}
Let $G$ be a graph with $n$ vertices and $m$ edges. Note that $\COR(G)$ is the convex hull of all $0/1$-vectors $x \in \R^{V \cup E}$ satisfying $x_{uv}=x_ux_v$ for all $uv \in E(G)$. We define the \emph{constraint graph} of $\COR(G)$ to be the graph $G'$ whose vertices are the variables of the above system, where two variables are adjacent in $G'$ if and only if they appear in a common constraint.  By~\cite[Theorem 1]{KK15}, $\COR(G)$ has an extended formulation of size $(n+m)2^{\tw(G')}$. Observe that $G'$ is obtained from $G$ by adding a path of length $2$ between $u$ and $v$ for all $uv \in E(G)$. It is easy to see that if $\tw(G)=1$, then $\tw(G')=2$ and if $\tw(G) \geq 2$, then $\tw(G')=\tw(G)$.  Therefore, $(n+m)2^{\tw(G')}=2^{O(\tw(G) + \log n)}$, as required. 
 \end{proof}

Up to the constant factor in the exponent, we conjecture that our upper bound is tight. 

\begin{conjecture}
For every $n$-vertex graph $G$, the extension complexity of $\COR(G)$ is $2^{\Omega(\tw(G) + \log n)}$.
\end{conjecture}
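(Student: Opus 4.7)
The conjectured lower bound splits naturally as $\xc(\COR(G)) \geq \max(\Omega(n),\, 2^{\Omega(\tw(G))})$. The linear-in-$n$ part is immediate: $\COR(G) \subseteq \R^{V \cup E}$ is full-dimensional (the points $0$, $\{e_v\}_{v \in V}$, and $\{e_u + e_v + e_{uv}\}_{uv \in E(G)}$ are affinely independent), so any extension has at least $\dim(\COR(G)) + 1 \geq n+1$ facets, yielding $\xc(\COR(G)) \geq n+1 = 2^{\Omega(\log n)}$.

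The bulk of the work lies in establishing $\xc(\COR(G)) = 2^{\Omega(\tw(G))}$. The plan leverages the minor-monotonicity of $\xc(\COR(\cdot))$ (Observation~\ref{obs:minormonotone}): it suffices to exhibit a family of ``hard'' witness graphs $\{H_t\}_{t \geq 1}$ with $\tw(H_t) = \Theta(t)$ and $\xc(\COR(H_t)) \geq 2^{\Omega(t)}$, such that every $G$ with $\tw(G) \geq f(t)$ contains some $H_t$ as a minor, for some $f(t) = O(t)$. A natural candidate is the family of $r \times r$ walls $W_r$. One would establish $\xc(\COR(W_r)) = 2^{\Omega(r)}$ via Yannakakis's factorization theorem by exhibiting a submatrix of the slack matrix of $\COR(W_r)$ whose support realises a large Unique Disjointness instance, in the spirit of the argument of Fiorini, Massar, Pokutta, Tiwary and de Wolf~\cite{fiorini2011exponential} for $\COR(K_n)$. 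The combinatorial novelty is to embed such an instance using only the restricted planar connectivity of a wall rather than the full connectivity of a clique.

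The principal obstacle is extracting $H_t$ as a minor from an \emph{arbitrary} graph $G$ with $\tw(G) = \Theta(t)$, incurring only linear loss. For graphs in a proper minor-closed class, the Robertson--Seymour structure theorem provides exactly such a linear wall-minor extraction, and this is presumably the route driving the paper's main theorem. In full generality, however, the best available Grid Minor Theorem of Chekuri and Chuzhoy only guarantees a wall minor of height $\Omega(\tw(G)^{\alpha})$ for some constant $\alpha < 1$, which through this plan yields only the much weaker bound $\xc(\COR(G)) \geq 2^{\Omega(\tw(G)^{\alpha})}$. Closing the gap to $2^{\Omega(\tw(G))}$ either requires an asymptotically linear Grid Minor Theorem for general graphs---a longstanding open problem in structural graph theory---or a fundamentally new lower-bound technique that bypasses grid extraction, perhaps by exploiting brambles, tangles, or the expansion-type properties implicit in high treewidth. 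I expect this step, rather than the slack-matrix analysis for the witness family, to be the genuine bottleneck.
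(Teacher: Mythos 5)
You have correctly diagnosed the situation: this statement is a \emph{conjecture} in the paper, not a theorem, and the paper only establishes the bound for proper minor-closed classes (Theorem~\ref{lowerbound}), plus the weaker $2^{\Omega(\tw(G)^\delta + \log n)}$ bound for general graphs via Chekuri--Chuzhoy. Your assessment of where the proof breaks down---the lack of a linear grid-minor theorem for general graphs---matches the paper's own remarks exactly, and your use of the dimension bound to handle the $\log n$ term and of minor-monotonicity to reduce to a witness family are both the same moves the paper makes (Observation~\ref{dimension}, Observation~\ref{obs:minormonotone}, and the geometric-mean combination in the proof of Theorem~\ref{lowerbound}).

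Where your route genuinely differs is in the lower bound for the grid-like witness. You propose to analyse the slack matrix of $\COR(W_r)$ directly and exhibit a large Unique Disjointness submatrix, rebuilding the communication-complexity argument of~\cite{fiorini2011exponential} from scratch under the restricted planar connectivity of a wall; you flag this embedding as a ``combinatorial novelty'' without saying how it would go. The paper sidesteps this entirely with a reduction: it replaces the plain grid by a \emph{grid with gadgets}, where each internal vertex is blown up into a constant-size planar \emph{crossover gadget} (borrowed from Lichtenstein's \textsf{3-SAT} $\to$ \textsf{Planar 3-SAT} reduction~\cite{lichtenstein1982planar}), imposes valid equalities to pass to a face, and shows that this face of $\COR(H)$ projects onto $\COR(K_{h,h})$. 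Minor-monotonicity then gives $\xc(\COR(K_h)) \le \xc(\COR(K_{h,h})) \le \xc(\COR(H))$, and the known bound $\xc(\COR(K_h)) \ge (1.5)^h$ from~\cite{KW15} finishes the job. This is cleaner and more modular: it never touches the slack matrix of the wall, and it converts ``restricted planar connectivity'' into ``full clique connectivity'' by a face-and-projection argument rather than by a new counting/communication argument. Your proposed UDISJ embedding might work, but you would have to actually construct it, and the paper's gadget reduction shows it is unnecessary.

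One small point of rigor: you write ``every $G$ with $\tw(G)\ge f(t)$ contains some $H_t$ as a minor, for some $f(t)=O(t)$.'' What you actually need, and what Demaine--Hajiaghayi~\cite{DH08} supplies for proper minor-closed classes, is the contrapositive with a linear function: a $G_{t,t}$ minor with $t=\Omega(\tw(G))$. Your phrasing is fine in spirit but worth stating in the form you will actually apply. Beyond that, your proposal is an honest account of the state of the problem rather than a proof, and it agrees with the paper on what remains open.
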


G\"o\"os, Jain and Watson \cite{Goos18} proved that there exists a sequence $(G_n)$ of graphs where each $G_n$ is an $n$-vertex graph and the extension complexity of $\STAB(G_n)$ is $2^{\Omega(n / \log n)}$. If true, our conjecture would improve their bound to $2^{\Omega(n)}$
and it would yield explicit $0/1$-polytopes with extension complexity exponential in their dimension, as predicted by the counting argument of Rothvo\ss~\cite{Rothvoss13} (see~\cite[Problem 7]{Weltge15}). 

As evidence for the conjecture, we prove that our bound is tight for minor-closed classes of graphs.  The following is our main result.  

\begin{theorem} \label{lowerbound}
For every proper minor-closed class $\mathcal{C}$, there exists a constant $c>0$ such that for every $n$-vertex graph $G \in \mathcal{C}$,
\begin{equation*}
\xc(\COR(G)) \geq 2^{c (\tw(G)+\log n)}\,.
\end{equation*}
\end{theorem}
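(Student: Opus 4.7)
The plan is to split the target bound $\xc(\COR(G)) \geq 2^{c(\tw(G)+\log n)}$ into the contributions of $\log n$ and $\tw(G)$ and treat each one separately. Concretely, I will establish (a) $\xc(\COR(G)) \geq n+1$ for every $n$-vertex graph $G$, and (b) $\xc(\COR(G)) \geq 2^{c_1\tw(G)}$ for every $G \in \mathcal{C}$, where $c_1 > 0$ depends only on $\mathcal{C}$. Since $\log\xc(\COR(G))$ is at least both $\log(n+1)$ and $c_1\tw(G)$, it is at least their average, and the desired bound follows with $c = \tfrac{1}{2}\min(1,c_1)$. Part (a) is immediate: the empty induced subgraph and the $n$ singletons $\{v\}$, $v \in V(G)$, give $n+1$ affinely independent points in $\COR(G)$, hence $\dim\COR(G) \geq n$ and $\xc(\COR(G)) \geq \dim\COR(G)+1 \geq n+1$.

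For part (b), the main ingredients are (i) the linear grid-minor theorem in proper minor-closed classes (Robertson--Seymour--Thomas for planar graphs, Demaine--Hajiaghayi in general), which provides, for each excluded minor $H$ of $\mathcal{C}$, a constant $c_H$ such that $\tw(G) \geq c_H r$ forces $\Gamma_r \preceq G$, where $\Gamma_r$ denotes the $r \times r$ grid; and (ii) the minor monotonicity of $\xc(\COR(\cdot))$ from Observation~\ref{obs:minormonotone}. Combined, they yield $\xc(\COR(G)) \geq \xc(\COR(\Gamma_{\lfloor\tw(G)/c_H\rfloor}))$ for every $G \in \mathcal{C}$, reducing part (b), and hence the whole theorem, to the key combinatorial bound
\begin{equation*}
\xc(\COR(\Gamma_r)) \geq 2^{c_0 r}
\end{equation*}
for some absolute constant $c_0 > 0$.

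This grid lower bound is the main obstacle. Since $\Gamma_r$ is planar and hence $K_5$-minor-free, one cannot appeal via minor monotonicity to the exponential lower bound of Fiorini, Massar, Pokutta, Tiwary and de~Wolf for $\COR(K_n)$, and the bound has to be proved directly on $\Gamma_r$. My planned approach is a slack-matrix argument: exhibit $2^{\Omega(r)}$ valid inequalities of $\COR(\Gamma_r)$ and $2^{\Omega(r)}$ induced subgraphs of $\Gamma_r$ whose associated submatrix of the slack matrix has nonnegative rank $2^{\Omega(r)}$, and conclude via Yannakakis's theorem. A natural combinatorial device is a middle row or column of $\Gamma_r$, which is a vertex separator of size $r$ connecting two halves of the grid by $r$ vertex-disjoint paths; these paths should allow the two halves to ``communicate'' across the separator and a unique-disjointness-type hard matrix to be embedded into the slack matrix, giving the exponential lower bound on nonnegative rank via the standard information-theoretic rectangle-cover argument.
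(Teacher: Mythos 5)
Your reduction skeleton matches the paper's almost exactly: take the geometric mean of an $\Omega(n)$ dimension bound and a $2^{\Omega(\tw(G))}$ bound, the latter obtained from the linear grid-minor theorem for proper minor-closed classes plus minor-monotonicity of $\xc(\COR(\cdot))$. Part~(a) and the reduction step at the start of~(b) are correct and appear in the paper as Observations~\ref{dimension} and~\ref{obs:minormonotone} together with the invocation of Demaine--Hajiaghayi.

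The genuine gap is the key lemma $\xc(\COR(G_{r,r})) \geq 2^{\Omega(r)}$. You state it but only sketch a plan, and the sketch is unlikely to go through as written. A ``unique-disjointness across the middle separator'' argument needs the value at each bottom (resp.\ left) boundary vertex to be carried \emph{unchanged} along its column (resp.\ row), so that the intersection of row~$i$ with column~$j$ realizes the product $x_i y_j$ independently for all $(i,j)$. In the ordinary grid, however, the row and the column \emph{share} the crossing vertex: any face constraint that forces column propagation through that vertex also forces the row variable to agree with the column variable there, collapsing the two channels. This interference at crossings is precisely the obstacle, and it is what the paper's construction is built to overcome. The paper does not argue on $G_{r,r}$ directly; it replaces each interior crossing with a constant-size planar ``crossover'' gadget (adapted from Lichtenstein's reduction from \textsf{3-SAT} to \textsf{Planar 3-SAT}), producing a graph $H$ that is still a minor of a slightly larger grid. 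A carefully chosen face of $\COR(H)$ forces each gadget to let the row value and the column value pass through each other independently, and that face projects affinely onto $\COR(K_{h,h})$ with $h=\Omega(r)$. Since $K_h\preceq K_{h,h}$ and $\xc(\COR(K_h))\geq 1.5^h$ by Kaibel--Weltge, the grid lemma follows. Without the crossover device, or some equivalent way to decouple row and column propagation at crossings, your grid lower bound remains unproved --- and it is the heart of the theorem.
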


Actually, the proof of Theorem~\ref{lowerbound} shows that for \emph{every graph} $G$, we have $\xc(\COR(G)) \geq 2^{\Omega(h + \log n)}$, where $h$ is the maximum height of a grid that $G$ contains as a minor. In virtue of the polynomial grid-minor theorem of Chekuri and Chuzhoy~\cite{chekuri2016polynomial, chuzhoy2016}, this implies that $\xc(\COR(G)) \geq 2^{\Omega(\tw(G)^\delta + \log n)}$ for some universal constant $\delta > \frac{1}{20}$.

\section{The lower bound}
In this section we prove our lower bound.  Let $P$ be a polytope. We note the following easy (and folklore) facts. If $A$ is an affine subspace, then $\xc(P \cap A) \leq \xc(P)$, and if $\pi$ is an affine map, then $\xc(\pi(P)) \leq \xc(P)$.  Since the extension complexity of a polytope is at least its dimension, we also have the following easy observation.

\begin{observation} \label{dimension}
For all $n$-vertex graphs $G$,
%
\(\xc(\COR(G)) \geqslant n\).
%
\end{observation}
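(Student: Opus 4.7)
The plan is to deduce the lower bound directly from the general inequality $\xc(P) \geq \dim(P)$ recalled in the paragraph immediately preceding the observation. Thus the task reduces to verifying that $\COR(G)$ has dimension at least $n$ in $\R^{V \cup E}$.

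To this end, I would produce $n+1$ explicit vertices of $\COR(G)$ that are affinely independent. By definition, the extreme points of $\COR(G)$ are the vectors $(\chi(X), \chi(E(X)))$ indexed by subsets $X \subseteq V$. The choice $X = \emptyset$ contributes the origin $\mathbf{0} \in \R^{V \cup E}$, and for each vertex $v \in V$ the choice $X = \{v\}$ contributes the vector whose only nonzero entry is a $1$ in the vertex coordinate indexed by $v$; indeed the edge coordinates vanish because no edge of $G$ is induced by a single vertex. Subtracting the origin from the remaining $n$ points yields the standard unit vectors $e_v$ supported on the vertex coordinates, which are linearly independent, so the selected $n+1$ points are affinely independent in $\R^{V \cup E}$. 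This gives $\dim(\COR(G)) \geq n$, and combining with $\xc(\COR(G)) \geq \dim(\COR(G))$ completes the argument.

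There is no substantive obstacle here: the proof is a one-line dimension count, and the only elementary fact one needs is that $E(\{v\}) = \emptyset$ for every $v \in V$, so that singletons produce points supported only on the vertex coordinates. Any minor strengthening (such as $\xc(\COR(G)) \geq n+1$) would follow equally easily by the sharper bound $\xc(P) \geq \dim(P) + 1$, but the weaker statement as written is what the observation demands.
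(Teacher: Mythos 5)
Your proof is correct and follows exactly the route the paper intends: the paper states the observation as an immediate consequence of the fact that extension complexity is at least dimension, and you simply fill in the (routine) verification that $\dim(\COR(G)) \geq n$ by exhibiting the origin together with the $n$ singleton points as affinely independent vertices.
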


We next show that the extension complexity of the correlation polytope is monotone under taking minors.  

\begin{observation} \label{obs:minormonotone}
If $H \preceq G$, then $\xc(\COR(H)) \leqslant \xc(\COR(G))$.
\end{observation}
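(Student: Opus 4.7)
The plan is to verify monotonicity under each of the three elementary minor operations—vertex deletion, edge deletion, edge contraction—since every minor is obtained by composing these. For each operation I will exhibit $\COR(H)$ as an affine image of a face of $\COR(G)$ and invoke the two facts stated just above the observation: $\xc$ does not increase under intersection with an affine subspace or under affine projection.

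Edge deletion is immediate: coordinate projection forgetting $x_e$ maps $(\chi(X), \chi(E_G(X)))$ to $(\chi(X), \chi(E_{G - e}(X)))$, realising $\COR(G - e)$ as an affine image of $\COR(G)$. Vertex deletion $H = G - v$ uses the face $\COR(G) \cap \{x_v = 0\}$; since $x_{vy} \leq x_v$ is valid for $\COR(G)$, each coordinate $x_{vy}$ also vanishes on this face, so coordinate projection discarding $x_v$ and all $x_{vy}$ yields $\COR(H)$.

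The main case is edge contraction $H = G / e$ with $e = uv$ contracted to a new vertex $w$. The inequalities $x_{uv} \leq x_u$ and $x_{uv} \leq x_v$ are valid for $\COR(G)$ because $uv \in E(X)$ forces $u, v \in X$, so the set
\[
F = \COR(G) \cap \{x : x_{uv} = x_u = x_v\}
\]
is a face. Its vertices are exactly the $(\chi(X), \chi(E_G(X)))$ with $u \in X \iff v \in X$, and these are in bijection with subsets of $V(H)$ via $w \in X' \iff u \in X$. Direct inspection at these vertices gives $\chi(E_G(X))_{yu} = \chi(E_G(X))_{yv}$ whenever $y$ is a common neighbour of $u$ and $v$ in $G$, so this equality holds throughout $F$. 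Define $\pi \colon \R^{V(G) \cup E(G)} \to \R^{V(H) \cup E(H)}$ by retaining the coordinates indexed by $V(G) \setminus \{u, v\}$ and by edges inside $V(G) \setminus \{u, v\}$, setting $\pi(x)_w = x_u$, and for each $y \in V(G) \setminus \{u, v\}$ adjacent to $w$ in $H$ setting $\pi(x)_{yw} = x_{yu}$ if $yu \in E(G)$ and $\pi(x)_{yw} = x_{yv}$ otherwise. A case split on whether $u, v \in X$ confirms that $\pi$ sends the vertices of $F$ bijectively onto the vertices of $\COR(H)$, so $\pi(F) = \COR(H)$ and the chain
\[
\xc(\COR(H)) = \xc(\pi(F)) \leq \xc(F) \leq \xc(\COR(G))
\]
completes the contraction case.

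The only real subtlety is picking the right face in the contraction step. The naive choice $\{x_u = x_v\}$ is insufficient: already in a triangle $uvz$, the midpoint of the correlation vectors for $X = \{u\}$ and $X = \{v\}$ satisfies $x_u = x_v = \tfrac{1}{2}$ while $x_{uv} = 0$, so it lies in $\COR(G) \cap \{x_u = x_v\}$ but not in the convex hull of the ``good'' vertices. Adding the constraint $x_{uv} = x_u$ upgrades the affine subspace into a true face, after which verifying that parallel edge coordinates agree on $F$ and that $\pi$ surjects onto $\COR(H)$ is routine.
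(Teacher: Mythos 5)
Your proof is correct and takes essentially the same route as the paper's: decompose the minor relation into elementary operations and realize each as a combination of passing to a face and taking an affine projection, using the identical face $\{x : x_u = x_{uv} = x_v\}$ for edge contraction. The only cosmetic difference is that the paper first deletes all incident edges and then removes the now-isolated vertex by a pure projection, whereas you handle general vertex deletion directly via the face $\{x : x_v = 0\}$; both are fine.
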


\begin{proof}
It is easy to see that $H \preceq G$ if and only if $H$ can be obtained from $G$ by deleting edges, contracting edges, and removing isolated vertices. We show that none of these operations increases the extension complexity of the correlation polytope. 

Let $uv \in E(G)$. Then $\COR(G \setminus uv)$ can be obtained from $\COR(G)$ by projecting out $x_{uv}$. Moreover $\COR(G / uv)$ is obtained from $\COR(G)$ by setting $x_u=x_{uv}=x_v$ (this defines a face since $x_{uv} \leq x_u$ and $x_{uv} \leq x_v$ are valid). If $w$ is an isolated vertex of $G$, then $\COR(G-w)$ is obtained from $\COR(G)$ by projecting out $x_w$.
\end{proof}

For $h \in \mathbb{N}$, we let $K_{h, h}$ be the complete bipartite graph with $2h$ vertices and $h^2$ edges, and $G_{h, h}$ be the $h \times h$ grid.  Recall that $G_{h,h}$ has vertex set $[h] \times [h]$, where $(a,b)$ is adjacent to $(a',b')$ if and only if $|a-a'|+|b-b'|=1$. 

We now define a modified grid that will appear in the proof of Theorem~\ref{thm:lowerbound_new}.
A \emph{grid with gadgets of height $h-1$} is obtained by modifying the grid $G_{h,h}$ as follows.  Let $G_{h,h}^{\circ}$ be the graph obtained from $G_{h,h}$ by subdividing each edge of $G_{h,h}$. For each $i, j \in [h]$, let $r_{i,1}, r_{i,1}', r_{i,2}, r_{i,2}', \dots, r_{i,h-1}',r_{i, h}$ be the vertices of $G_{h,h}^{\circ}$ along the $i$th row and $r_{1,j}, c_{1,j}', r_{2,j}, c_{2,j}', \dots, c_{h-1,j}',r_{h,j}$ be the vertices of $G_{h,h}^{\circ}$ along the $j$th column. 
The grid with gadgets of height $h-1$ is obtained from $G_{h,h}^{\circ}$ by performing the following operations:
\begin{itemize}
    \item delete $r_{1,1}$ and $r_{h,h}$,
    \item delete $r_{1,1}', r_{1,2}', \dots, r_{1,h-1}'$ and $c_{1,1}', c_{2,1}', \dots, c_{h-1,1}'$, 
    \item for each $i, j \in [h] \setminus \{1,h\}$, delete the edges $r_{i,h-1}'r_{i, h}$ and $c_{h-1,j}'r_{h,j}$,
    \item for each $i, j \in [h] \setminus \{1\}$, add the edge $r'_{i,j-1}c_{i-1,j}'$, 
    \item for each $i, j \in [h] \setminus\{1,h\}$, replace $r_{i,j}$ by a constant-size planar graph which will be defined in the proof of Theorem~\ref{thm:lowerbound_new}. 
\end{itemize}
See Figure~\ref{fig:grid} for an illustration.    

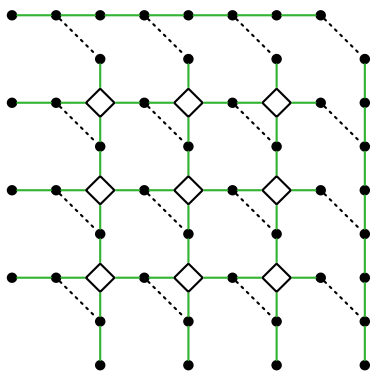
\begin{figure}[ht!]\centering
\setlength{\belowcaptionskip}{-10pt}
  \begin{tikzpicture}[scale=.58] \tikzset{arrowb/.style={darkgreen},
        arrowr/.style={dotted},
        state/.style = {draw,fill,circle,inner sep = 1.1pt},
        state1/.style = {draw,diamond, inner sep = 2.75pt}}
        
    \clip(-.22,-.22)rectangle(8.22,8.22);
    \foreach \x in {1,...,3}
    \foreach \y in {1,...,3} 
       {\pgfmathtruncatemacro{\label}{\x + 5 *  \y }
       \node [state1]  (2\x\y) at (2*\x,2*\y) {};}
       
    \foreach \x in {0}
    \foreach \y in {1,...,4} 
       {\pgfmathtruncatemacro{\label}{\x + 5 *  \y }
       \node [state]  (2\x\y) at (2*\x,2*\y) {};}
    
    \foreach \x in {4}
    \foreach \y in {0,...,3} 
       {\pgfmathtruncatemacro{\label}{\x + 5 *  \y }
       \node [state]  (2\x\y) at (2*\x,2*\y) {};}
       
    \foreach \x in {1,...,3}
    \foreach \y in {0} 
       {\pgfmathtruncatemacro{\label}{\x + 5 *  \y }
       \node [state]  (2\x\y) at (2*\x,2*\y) {};}
    
    \foreach \x in {1,...,3}
    \foreach \y in {4} 
       {\pgfmathtruncatemacro{\label}{\x + 5 *  \y }
       \node [state]  (2\x\y) at (2*\x,2*\y) {};}
    
    \foreach \x in {1,...,4}
    \foreach \y in {0,...,3} 
       {\pgfmathtruncatemacro{\label}{\x + 5 *  \y }
       \node [state]  (\x\y') at (2*\x,2*\y+1) {};}
    \foreach \x in {0,...,3}
    \foreach \y in {1,...,4} 
      {\pgfmathtruncatemacro{\label}{\x + 5 *  \y }
      \node [state]  (\x'\y) at (2*\x+1,2*\y) {};}

    \foreach \x [count=\xi] in {0,...,3}
    {
    \foreach \y [count=\yi] in {0,...,3}  
      {
        \path 
         (\x'\yi) edge [arrowr] (\xi\y');
      }
    }

    \foreach \y in {1,...,3} {
        \foreach \x [count=\xi] in {0,...,2} {
            \draw[arrowb] (2\x\y) -- (\x'\y)-- (2\xi\y);
        }
        \draw[arrowb] (23\y) -- (3'\y);
    }
    
    \foreach \x  in {1,...,3} {
        \foreach \y [count=\yi] in {0,...,2} {
            \draw [arrowb] (2\x\y) -- (\x\y')-- (2\x\yi);
        }
        \draw [arrowb](2\x3)  -- (\x3');
    }
    
    \foreach \x  in {4}
    {
    \foreach \y [count=\yi] in {0,...,2}  
      {
        \path 
        (2\x\y) edge [arrowb] (\x\y')
        (\x\y') edge [arrowb] (2\x\yi);
      }
    }
    \foreach \x [count=\xi] in {0,...,2}
    {
    \foreach \y in {4}  
      {
        \path 
        (2\x\y) edge [arrowb] (\x'\y)
        (\x'\y) edge [arrowb] (2\xi\y);
      }
    }
    \path 
      (234) edge [arrowb] (3'4)
      (243) edge [arrowb] (43')
    ;
 \end{tikzpicture}
\caption{The grid with gadgets of height $4$. The diamond vertices are the vertices to be replaced.}\label{fig:grid}
\end{figure}

We are now ready to prove our lower bound for minor-closed classes.

\begin{theorem} \label{thm:lowerbound_new}
For every proper minor-closed class $\mathcal{C}$, there exists a constant $c'>0$ such that for every $n$-vertex graph $G \in \mathcal{C}$, %
\begin{equation*}
\xc(\COR(G)) \geq 2^{c' \tw(G)}.
\end{equation*}
\end{theorem}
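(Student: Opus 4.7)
The plan is to combine the minor-monotonicity of $\xc(\COR(\cdot))$ from Observation~\ref{obs:minormonotone} with two graph-minor reductions that isolate a clean combinatorial core. First, for a proper minor-closed class $\mathcal{C}$ one invokes the linear grid-minor theorem for $H$-minor-free graphs (Demaine--Hajiaghayi and subsequent work) to obtain a constant $\alpha_{\mathcal{C}} > 0$ such that every $G \in \mathcal{C}$ contains the plain grid $G_{h,h}$ as a minor with $h \geq \alpha_{\mathcal{C}}\,\tw(G)$. Second, every plain grid $G_{h,h}$ contains the grid with gadgets of height $\Omega(h)$ as a minor: since each gadget is a fixed planar graph of constant size, it embeds into a $c \times c$ patch of the plain grid, and so the grid with gadgets of height $\lfloor h/c \rfloor$ is a minor of $G_{h,h}$. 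By Observation~\ref{obs:minormonotone} it therefore suffices to prove that the grid with gadgets of height $h$, call it $G_h^{\star}$, satisfies $\xc(\COR(G_h^{\star})) \geq 2^{\Omega(h)}$.

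The heart of the proof is this exponential lower bound for $G_h^{\star}$, and I expect it to be the main obstacle. The intended role of the gadgets is to let coordinates of $\COR(G_h^{\star})$ encode quadratic interactions between independent Boolean variables. Concretely, along the $i$-th horizontal path one routes a Boolean value $x_i$ and along the $j$-th vertical path a Boolean value $y_j$; using the identifications available in $\COR$ (contracting a path edge $uv$ corresponds to the face $x_u = x_{uv} = x_v$), these values are constant along each path on a suitable face $F$ of $\COR(G_h^{\star})$. The constant-size planar gadget at position $(i,j)$ must be designed so that one of its coordinates is forced on $F$ to equal the product $x_i y_j$. Granted this, projecting $F$ onto the $x_i$, the $y_j$, and the $h^2$ product coordinates yields $\COR(K_{h,h})$ as an affine image of a face of $\COR(G_h^{\star})$; both operations can only decrease extension complexity, so $\xc(\COR(G_h^{\star})) \geq \xc(\COR(K_{h,h}))$.

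Finally, the classical lower bound $\xc(\COR(K_{h,h})) \geq 2^{\Omega(h)}$ follows from the unique-disjointness reduction of Fiorini, Massar, Pokutta, Tiwary and de Wolf~\cite{fiorini2011exponential}: indexing rows and columns of a submatrix of the slack matrix of $\COR(K_{h,h})$ by subsets of $[h]$, one reads off the indicator that these subsets intersect, and the nonnegative rank of this matrix is $2^{\Omega(h)}$. Chaining the three inequalities gives $\xc(\COR(G)) \geq \xc(\COR(G_h^{\star})) \geq \xc(\COR(K_{h,h})) \geq 2^{\Omega(h)} = 2^{\Omega(\tw(G))}$, which is Theorem~\ref{thm:lowerbound_new}. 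The delicate point, as emphasized above, is exhibiting a concrete planar gadget of constant size that genuinely enforces the product $x_i y_j$ on the selected face, since without this the entire reduction to $\COR(K_{h,h})$ collapses.
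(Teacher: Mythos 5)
Your outline matches the paper's strategy at a structural level: a grid-minor theorem reduces to $G_{h,h}$, the modified ``grid with gadgets'' is extracted as a minor, minor-monotonicity (Observation~\ref{obs:minormonotone}) transfers the bound, and a face of the correlation polytope of the gadgeted grid is projected onto $\COR(K_{h,h})$, for which an exponential lower bound is known. However, there is a genuine gap, and you identify it yourself: you do not construct the gadget. Saying ``the constant-size planar gadget at position $(i,j)$ must be designed so that one of its coordinates is forced on $F$ to equal the product $x_i y_j$'' is stating a specification, not a proof, and the entire lower bound hinges on exhibiting such a gadget and verifying that (a) the constraints you impose are face-defining for $\COR(H)$ (i.e., come from valid inequalities), and (b) the resulting face projects \emph{onto} $\COR(K_{h,h})$, meaning every choice of row/column Booleans extends to a vertex of the face. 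Without these two checks, the chain $\xc(\COR(G)) \geq \xc(\COR(K_{h,h}))$ is unsupported.

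The paper's construction is worth noting because it differs slightly from what you describe. The gadget is not asked to compute $x_i y_j$; it is a \emph{crossover} gadget (adapted from Lichtenstein's Planar \textsf{3-SAT} reduction) whose only job is to let a horizontal line carrying $x_i$ and a vertical line carrying $y_j$ pass through a planar intersection without interference, enforced by imposing $x_b = x_t$ and $x_\ell = x_r$ on the chosen face. The product coordinates $x_i y_j$ are then read off a separate set of diagonal edges of $H$ (one endpoint on a row path, one on a column path), where the edge variable $x_{uv}$ automatically equals $x_u x_v$ on $0/1$-vertices. Each clause vertex of the crossover gadget is replaced by a small graph and face-defining equations (e.g.\ $x_{ij}=0$ simulates $\neg i \vee \neg j$; $x_{i\bar\jmath}=0$ together with $x_j + x_{\bar\jmath} - 2 x_{j\bar\jmath} = 1$ simulates $\neg i \vee j$; and so on for the ternary clauses). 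Verifying that each of these is valid for $\COR$ and that together they force the crossover behaviour is precisely the content that is missing from your sketch. In short: right skeleton, but the hard technical core is absent, as you acknowledge.
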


\begin{proof}
As shown by Demaine and Hajiaghayi~\cite{DH08} (see also Kawarabayashi and Kobayashi~\cite{KK12}), since our initial graph $G$ belongs to a proper minor-closed class, $G$ contains $G_{t,t}$ as a minor, where $t = \Omega(\tw(G))$. 
Moreover, observe that there exists $h=\Omega(t)$ such that if $H$ is the grid with gadgets of height $h$, then  $H \preceq G_{t,t}$. Let $H$ be the grid with gadgets of height $h$.  Since the extension complexity of the correlation polytope is minor monotone, $\xc(\COR(H)) \le \xc(\COR(G))$. Therefore, it suffices to show the theorem for $H$.

The idea is to describe a face of $\COR(H)$ which projects to the correlation polytope of the complete bipartite graph $K_{h,h}$. 
The projection is defined in such a way that the vertices of the bipartition of $K_{h,h}$ correspond to bottom and left vertices of the grid with gadgets, and the edges of $K_{h,h}$ correspond to the dotted diagonal edges of Figure~\ref{fig:grid}.
Roughly speaking, the face is defined in such a way that the value of the variable for each bottom and each left vertex of the grid propagates along the corresponding vertical and horizontal path. The gadgets make sure that propagation along vertical paths does not interfere with propagation along horizontal paths. 

The gadget used in the proof is inspired by the \emph{crossover gadget} of the reduction from \textsf{3-SAT} to \textsf{Planar 3-SAT} in~\cite{lichtenstein1982planar}.
In the crossover gadget (see Figure~\ref{fig:sub2_gadget}) the square vertices represent the clauses of a \textsf{SAT} formula and the round vertices represent the variables. When a round vertex is adjacent to a square via a blue-dashed (resp.~red-dotted) edge, this means that the corresponding variable (resp.~negation of the variable) appears in the corresponding clause. The main property of the crossover gadget in Figure~\ref{fig:sub2_gadget} is that if all clauses in the crossover gadget are satisfied, then $b$ is true if and only if $t$ is true and $\ell$ is true if and only if $r$ is true. 
 
We simulate each copy of the crossover gadget with a copy of a fixed planar graph, which we call the \emph{graph gadget}. Each copy is independent from the other copies: all the new vertices and edges that are created in the process are local to the copy. We ensure that the graph gadget behaves exactly like the crossover gadget by restricting to an appropriate face of $\COR(H)$.  For each variable $i$ in the crossover gadget, we interpret $i$ as being true or false, according as, the corresponding variable $x_i$ of $\COR(H)$  is equal to $1$ or $0$.
The grid with gadgets $H$ in Figure~\ref{fig:grid} is obtained by replacing each diamond by the graph gadget.  

\begin{figure}[ht!]\centering 
  \begin{subfigure}{.42\textwidth}\centering
    \begin{tikzpicture}[scale=.68]
\tikzset{arrowg/.style={darkgreen},
    state/.style = {draw,diamond,inner sep = 2.75pt}}
    \clip(-3.8,-3.8)rectangle(3.8,3.8);
    \node[state] (0) at (0,0) {};
    \node (r) at (2,0) {};
    \node (l) at (-2,0) {};
    \node (d) at (0,-2) {};
    \node (u) at (0,2) {};
    
    \path 
        (0) edge[arrowg] (u)
        (0) edge[arrowg] (r)
        (0) edge[arrowg] (l)
        (0) edge[arrowg] (d);
  \end{tikzpicture}
  \caption{}\label{fig:sub1_gadget}
  \end{subfigure}\hspace*{1cm}
  \begin{subfigure}{.42\textwidth}\centering
    \begin{tikzpicture}[scale=.65]   
\tikzset{arrowr/.style={red,dotted},
    arrowb/.style={blue,densely dashed},
    arrowg/.style={darkgreen},
    state/.style = {draw,rectangle,inner sep = 2.2pt},
    state1/.style = {draw,fill,circle,inner sep = 1.2pt}}
    \clip(-3.8,-3.8)rectangle(3.8,3.8);
    \node[state1, label={[left,yshift=-1mm]$b$}] (b2) at (0,-3) {};
    \node[state1, label={[left,yshift=1mm]$t$}] (b1) at (0,3) {};
    \node[state1, label={[above,xshift=-1mm]$\ell$}] (a1) at (-3,0) {};
    \node[state1, label={[above,xshift=1mm]$r$}] (a2) at (3,0) {};
    \node (r) at (4,0) {};
    \node (l) at (-4,0) {};
    \node (d) at (0,-4) {};
    \node (u) at (0,4) {};
    
    \node[state1] (alpha) at (1,-1) {};
    \node[state1] (beta) at (1,1) {};
    \node[state1] (delta) at (-1,-1) {};
    \node[state1] (gamma) at (-1,1) {};
    
    \node[state] (-22) at (-1.5,1.5) {};
    \node[state] (-2-2) at (-1.5,-1.5) {};
    \node[state] (22) at (1.5,1.5) {};
    \node[state] (2-2) at (1.5,-1.5) {};
    \node[state] (-21) at (-2,.5) {};
    \node[state] (-2-1) at (-2,-.5) {};
    \node[state] (21) at (2,.5) {};
    \node[state] (2-1) at (2,-.5) {};
    \node[state] (1-2) at (.5,-2) {};
    \node[state] (-1-2) at (-.5,-2) {};
    \node[state] (12) at (.5,2) {};
    \node[state] (-12) at (-.5,2) {};
    \node[state1] (0) at (0,0) {};
    \node[state] (0+) at (0,.6) {};
    \node[state] (0-) at (0,-.6) {};
    \node[state] (0-1) at (0,-1) {};
    \node[state] (01) at (0,1) {};
    \node[state] (-10) at (-1,0) {};
    \node[state] (10) at (1,0) {};
    
    \path 
         (b1) edge [arrowr] (-12)
         (b1) edge [arrowr] (12)
         (b1) edge [arrowb] (-22)
         (b1) edge [arrowb] (22)
         (b2) edge [arrowr] (2-2)
         (b2) edge [arrowr] (-2-2)
         (b2) edge [arrowb] (1-2)
         (b2) edge [arrowb] (-1-2)
         (a1) edge [arrowr] (-21)
         (a1) edge [arrowr] (-2-1)
         (a1) edge [arrowb] (-2-2)
         (a1) edge [arrowb] (-22)
         (a2) edge [arrowr] (2-2)
         (a2) edge [arrowr] (22)
         (a2) edge [arrowb] (2-1)
         (a2) edge [arrowb] (21)
         (alpha) edge [arrowr] (10)
         (alpha) edge [arrowr] (2-1)
         (alpha) edge [arrowb] (2-2)
         (alpha) edge [arrowr] (1-2)
         (alpha) edge [arrowr] (0-1)
         (alpha) edge [arrowb] (0-)
         (beta) edge [arrowb] (0+)
         (beta) edge [arrowr] (01)
         (beta) edge [arrowr] (12)
         (beta) edge [arrowb] (22)
         (beta) edge [arrowr] (21)
         (beta) edge [arrowr] (10)
         (delta) edge [arrowb] (0-)
         (delta) edge [arrowr] (0-1)
         (delta) edge [arrowr] (-1-2)
         (delta) edge [arrowb] (-2-2)
         (delta) edge [arrowr] (-2-1)
         (delta) edge [arrowr] (-10)
         (gamma) edge [arrowb] (0+)
         (gamma) edge [arrowr] (-10)
         (gamma) edge [arrowr] (-21)
         (gamma) edge [arrowb] (-22)
         (gamma) edge [arrowr] (-12)
         (gamma) edge [arrowr] (01)
         (a1) edge [arrowg] (l)
         (a2) edge [arrowg] (r)        
         (b1) edge [arrowg] (u)
         (b2) edge [arrowg] (d)
         (0) edge [arrowb] (0-)
         (0) edge [arrowr] (0+);
\end{tikzpicture}
  \caption{}\label{fig:sub2_gadget}
  \end{subfigure}
  \setlength{\belowcaptionskip}{-10pt}
\caption{Crossover gadget.}\label{fig:gadget}
\end{figure}
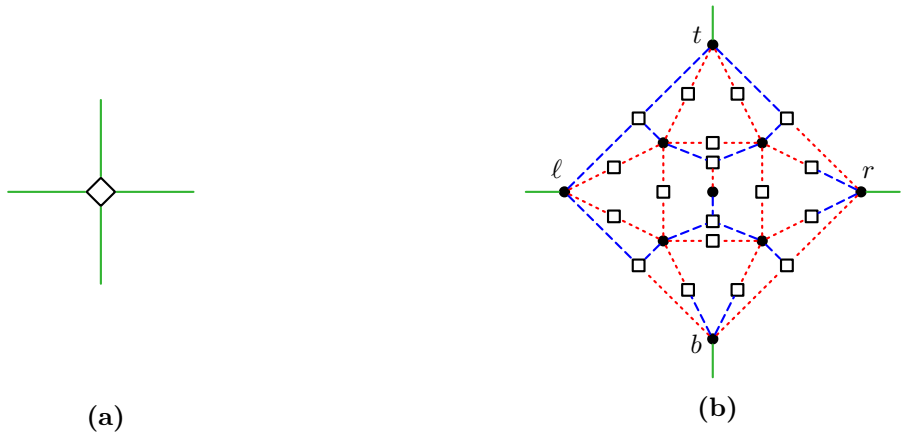

To convert the crossover gadget into the graph gadget, we describe replacement rules for the neighbourhood of each square vertex (clause).   
After all the rules are applied, we obtain just a graph. The rules are described in Figures~\ref{fig:not2and} to~\ref{fig:3or}, together with a set of equations that are to be added to the description of the desired face of $\COR(H)$. The equations we add originate from valid inequalities, hence the set of points of $\COR(H)$ satisfying all of them is indeed a face of $\COR(H)$.

First, consider Figure~\ref{fig:not2and}. It specifies that degree-$2$ square vertices adjacent to two red edges should be replaced by a single edge. Here, we want to simulate the clause $(\neg\,i \vee \neg\,j) = \neg\,(i \wedge j)$. Thus, we want at most one of $i$ or $j$ to be true, which is equivalent to setting $x_{ij} = 0$. 

Next, the replacement described in Figure~\ref{fig:noti} is to transform paths consisting of one red edge and one blue edge incident to a degree-$2$ square vertex into a path of the same length. Here, we want to simulate $(\neg \,i \vee  j)$. We claim this is equivalent to adding the following set of equations: $x_{i\overline{\jmath}} = 0$ and $x_j+x_{\overline{\jmath}}-2x_{j\overline{\jmath}} = 1$. The first constraint implies that $x_i$ and $x_{\overline{\jmath}}$ do not both equal $1$,
while the second constraint ensures that exactly one of $x_j$ and $x_{\overline{\jmath}}$ equals $1$.
Therefore, at least one of $\neg \, i$ or $j$ is true.  

Finally, we replace the degree-$3$ square vertices as in Figures~\ref{fig:3or_ij}, \ref{fig:3or_i} and \ref{fig:3or}. 
Here, we want to simulate the clauses containing three literals.  For instance, consider Figure~\ref{fig:3or_ij}, where we want to simulate $(\neg\,i \vee \neg\,j \vee k)$. The replacements in Figure~\ref{fig:3or_i} and \ref{fig:3or} are similar. We claim this is described by adding the following set of equations:
$x_{ii'} = 0$, $x_{jj'} = 0$, $x_{\overline{k}k'} = 0$, $x_k+x_{\overline{k}}-2x_{k\overline{k}} = 1$, and
$x_{i'} + x_{j'} + x_{k'} - 2x_{i'j'} - 2x_{i'j'} - 2x_{j'k'} = 1$. The last constraint implies that exactly one of $x_{i'}, x_{j'}$,  or $x_{k'}$ equals $1$.  The second constraint implies that exactly one of $x_k$ and $x_{\overline{k}}$ equals $1$.  The first constraint implies that at most one of $x_i$ and $x_{i'}$ equals $1$, at most one of $x_j$ and $x_{j'}$ equals $1$, and at most one of $x_k$ and $x_{k'}$ equals $1$.  Together, this implies $x_i=0$ or $x_j=0$ or $x_k=1$, as required.

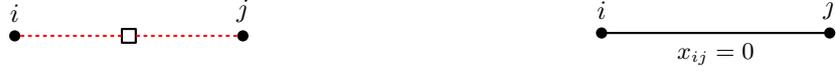
\begin{figure}[ht!]\centering
  \begin{subfigure}{.4\textwidth}\centering
    \begin{tikzpicture}[scale=.75]     \tikzset{arrow/.style={red,dotted},
    state/.style = {draw,circle,inner sep = 1.2pt, fill},
    state1/.style = {draw,rectangle,inner sep = 2.6pt}}
    \clip(-2.5,-.6) rectangle (2.5,.7);
    \node[state1] (0) at (0,0) {};
    \node[inner sep = 1pt, state, label=above:{$i$}] (i) at (-2,0) {};
    \node[inner sep = 1pt, state, label=above:{$j$}] (j) at (2,0) {};
    \path 
         (0) edge [arrow] (i)
         (0) edge [arrow] (j);
  \end{tikzpicture}
  \end{subfigure}\hspace*{1cm}
  \begin{subfigure}{.4\textwidth}\centering
    \begin{tikzpicture}[scale=.75]
\tikzset{state/.style = {draw,fill,circle,inner sep = 1.2pt}}
    \clip(-2.5,-.6) rectangle (2.5,.6);

    \node[state, label=above:{$i$}] (i) at (-2,0) {};
    \node[state, label=above:{$j$}] (j) at (2,0) {};
    \node [below] at (0,0) {\footnotesize $x_{ij} = 0$};
    
    \draw (i) -- (j);
  \end{tikzpicture}
  \end{subfigure}\vspace*{-1mm}
\caption{Replacement for $(\neg\,i \vee \neg\,j)$.}\label{fig:not2and}
\end{figure}\vspace*{-5mm}

\begin{figure}[ht!]\centering 
  \begin{subfigure}{.4\textwidth}\centering
    \begin{tikzpicture}[scale=.75] \tikzset{arrowr/.style={red,dotted},
    arrowb/.style={blue, densely dashed},
    state/.style = {draw,circle,inner sep = 1.2pt, fill},
    state1/.style = {draw,rectangle,inner sep = 2.6pt}}
    \clip(-2.5,-1.05)rectangle(2.5,.8);
    \node[state1] (0) at (0,0) {};
    \node[inner sep = 1pt, state, label=above:{$i$}] (i) at (-2,0) {};
    \node[inner sep = 1pt, state, label=above:{$j$}] (j) at (2,0) {};
    \path 
         (0) edge [arrowr] (i)
         (0) edge [arrowb] (j);
\end{tikzpicture}
  \end{subfigure}\hspace*{1cm}
  \begin{subfigure}{.4\textwidth}\centering
    \begin{tikzpicture}[scale=.75]
\tikzset{arrowb/.style={blue, densely dashed},
    state/.style = {draw,fill,circle, inner sep = 1.2pt}}
    \clip(-2.5,-1.05)rectangle(2.5,.8);
    
    \node[state] (0) at (0,0) {};
    \node[state, label=above:{$\overline{\jmath}$}] (notj) at (0,0) {};  
    
    \node[below,text width=4.5cm,align=center] at (0.25,0) {\footnotesize 
    $\begin{array}{r@{\hskip2pt}l}
    x_{i\overline{\jmath}} = 0 \\[-2pt]
    x_j+x_{\overline{\jmath}}-2x_{j\overline{\jmath}} = 1
    \end{array}$};
    
    \node[state, label=above:{$i$}] (i) at (-2,0) {};
    \node[state, label=above:{$j$}] (j) at (2,0) {};
    
    \draw (i) -- (notj) -- (j);
\end{tikzpicture}
  \end{subfigure}\vspace*{-1mm}
\caption{Replacement for $(\neg\,i \vee j)$.}
\label{fig:noti}
\end{figure}\vspace*{-3mm}

\begin{figure}[ht!]\centering
    \savebox{\largestimage}{
        \begin{tikzpicture}[scale=1.15] \tikzset{state/.style = {draw,fill,circle, inner sep = 1.2pt}}
                
    \clip(-3.75,-1.5)rectangle(1.8,2.2);
    \node[state, label={[yshift=1.5mm]left:$i'$}] (i') at (0,0) {};
    \node[state, label={[yshift=1.5mm]right:$j'$}] (j') at (1,0) {};
    \node[state,label=right:{$k'$}] (k') at (.5,.867) {};
    \node[state, label={[yshift=1.5mm]left:$i$}] (i) at (-.5,-.289) {};
    \node[state, label={[yshift=1.5mm]right:$j$}] (j) at (1.5,-.289) {};
    \node[state,label=right:{$\overline{k}$}] (notk) at (.5,1.44) {};
    \node[state, label=right:{$k$}] (k) at (.5,2.02) {};
    
    \node[below,text width=7.5cm,align=center] at (-1,-.289) {\footnotesize
    $\begin{array}{r@{\hskip2pt}l}
    x_{ii'} = x_{jj'} = x_{\overline{k}k'} &= 0 \\
    x_k + x_{\overline{k}} - 2x_{k\overline{k}} &= 1\\
    x_{i'} + x_{j'} + x_{k'} - 2x_{i'j'} - 2x_{i'k'} - 2x_{j'k'} &= 1
    \end{array}$
    };
    
    \draw (i)--(i')--(k');
    \draw (j)--(j');
    \draw (k)--(notk)--(k');
    \draw (i')--(j')--(k');
\end{tikzpicture}
    }
    \begin{subfigure}[t]{.45\textwidth}\centering
        \raisebox{\dimexpr.5\ht\largestimage-.5\height}{ \begin{tikzpicture}[scale=2] \tikzset{arrowr/.style={red,dotted},     arrowb/.style={blue, densely dashed},
    state/.style = {draw,fill,circle,inner sep = 1.2pt},
    state1/.style = {draw,rectangle, inner sep = 2.6pt}}

    \node[state1] (0) at (.5,.289) {};
    \node[state, label=left:{$i$}] (i) at (0,0) {};
    \node[state, label=right:{$j$}] (j) at (1,0) {};
    \node[state, label=right:{$k$}] (k) at (.5,.867) {};

    \path 
         (0) edge [arrowr] (i)
         (0) edge [arrowb] (k)
         (0) edge [arrowr] (j);
\end{tikzpicture}
        }
    \end{subfigure}
    \begin{subfigure}[t]{.45\textwidth}\centering
        \hspace*{-1.5cm}\usebox{\largestimage}
    \end{subfigure}
\caption{Replacement for $(\neg\,i \vee \neg\,j \vee k)$}\label{fig:3or_ij}
\end{figure}

\begin{figure}[ht!]\centering
    \savebox{\largestimage}{
        \begin{tikzpicture}[scale=1.15]
\tikzset{state/.style = {draw,fill,circle, inner sep = 1.2pt}}
    \clip(-3.35,-2.05)rectangle(2.3,2.2);
    \node[state, label={[yshift=1.5mm]left:$i'$}] (i') at (0,0) {};
    \node[state, label={[yshift=1.5mm]right:$j'$}] (j') at (1,0) {};
    \node[state,label=right:{$k'$}] (k') at (.5,.867) {};
    \node[state, label={[yshift=1.5mm]left:$i$}] (i) at (-.5,-.289) {};
    \node[state, label={[yshift=1.5mm]right:$\overline{\jmath}$}] (notj) at (1.5,-.289) {};
    \node[state,label=right:{$\overline{k}$}] (notk) at (.5,1.44) {};
    \node[state, label=right:{$k$}] (k) at (.5,2.02) {};
    \node[state, label={[yshift=1.5mm]right:$j$}] (j) at (2,-.578) {};
    
    \node[below,text width=7.5cm,align=center] at (-.5,-.578) {\footnotesize
    $\begin{array}{r@{\hskip2pt}l}
    x_{ii'}=  
    x_{\overline{\jmath}j'} = 
    x_{\overline{k}k'} &= 0 \\
    x_j + x_{\overline{\jmath}} - 2x_{j\overline{\jmath}} &= 1\\
    x_k + x_{\overline{k}} - 2x_{k\overline{k}} &= 1\\
    x_{i'} + x_{j'} + x_{k'} - 2x_{i'j'} - 2x_{i'k'} - 2x_{j'k'} &= 1
    \end{array}$
    };
    
    \draw (i)--(i')--(k');
    \draw (j)--(notj)--(j');
    \draw (k)--(notk)--(k');
    \draw (i')--(j')--(k');
\end{tikzpicture}
    }
    \begin{subfigure}[t]{.45\textwidth}\centering
        \raisebox{\dimexpr.5\ht\largestimage-.5\height}{
            \begin{tikzpicture}[scale=2] 
\tikzset{arrowr/.style={red,dotted},     arrowb/.style={blue, densely dashed},
    state/.style = {draw,fill,circle,inner sep = 1.2pt},
    state1/.style = {draw,rectangle, inner sep = 2.6pt}}

    \node[state1] (0) at (.5,.289) {};
    \node[state, label=left:{$i$}] (i) at (0,0) {};
    \node[state, label=right:{$j$}] (j) at (1,0) {};
    \node[state, label=right:{$k$}] (k) at (.5,.867) {};

    \path 
         (0) edge [arrowr] (i)
         (0) edge [arrowb] (k)
         (0) edge [arrowb] (j);
\end{tikzpicture}
        }
    \end{subfigure}
    \begin{subfigure}[t]{.45\textwidth}\centering
        \hspace*{-1.5cm}\usebox{\largestimage}
    \end{subfigure}
\caption{Replacement for $(\neg\,i \vee j \vee k)$}\label{fig:3or_i}
\end{figure}
{\setlength{\belowcaptionskip}{-10pt}
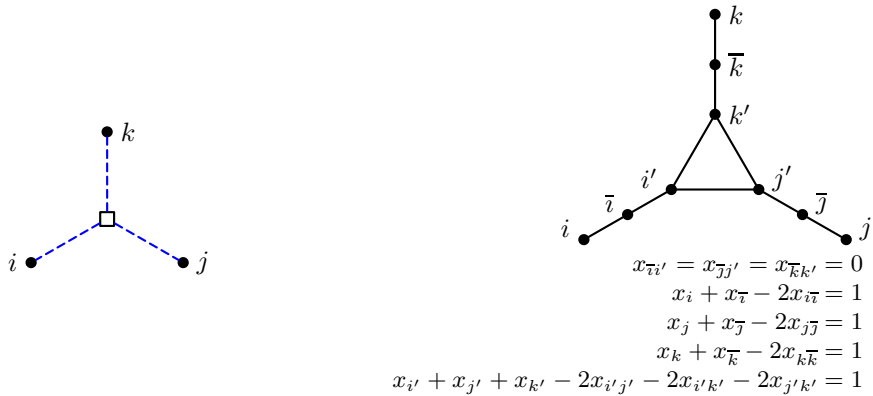
\begin{figure}[ht!]\centering
        \savebox{\largestimage}{
        \begin{tikzpicture}[scale=1.15]
\tikzset{state/.style = {draw,fill,circle, inner sep = 1.2pt}}
    \clip(-3.35,-2.4)rectangle(2.3,2.2);
    \node[state, label={[yshift=1.5mm]left:$i'$}] (i') at (0,0) {};
    \node[state, label={[yshift=1.5mm]right:$j'$}] (j') at (1,0) {};
    \node[state,label=right:{$k'$}] (k') at (.5,.867) {};
    \node[state, label={[yshift=1.5mm]left:$\overline{\imath}$}] (noti) at (-.5,-.289) {};
    \node[state, label={[yshift=1.5mm]right:$\overline{\jmath}$}] (notj) at (1.5,-.289) {};
    \node[state,label=right:{$\overline{k}$}] (notk) at (.5,1.44) {};
    \node[state, label=right:{$k$}] (k) at (.5,2.02) {};
    \node[state, label={[yshift=1.5mm]right:$j$}] (j) at (2,-.578) {};
    \node[state, label={[yshift=1.5mm]left:$i$}] (i) at (-1,-.578) {};
    
    \node[below,text width=7.5cm,align=center] at (-.5,-.578) {\footnotesize
    $\begin{array}{r@{\hskip2pt}l}
    x_{\overline{\imath}i'} = 
    x_{\overline{\jmath}j'} =
    x_{\overline{k}k'} &= 0 \\
    x_i + x_{\overline{\imath}} - 2x_{i\overline{\imath}} &= 1\\
    x_j + x_{\overline{\jmath}} - 2x_{j\overline{\jmath}} &= 1\\
    x_k + x_{\overline{k}} - 2x_{k\overline{k}} &= 1\\
    x_{i'} + x_{j'} + x_{k'} - 2x_{i'j'} - 2x_{i'k'} - 2x_{j'k'} &= 1
    \end{array}$
    };
    
    \draw (i)--(noti)--(i')--(k');
    \draw (j)--(notj)--(j');
    \draw (k)--(notk)--(k');
    \draw (i')--(j')--(k');
\end{tikzpicture}
    }
    \begin{subfigure}[t]{.45\textwidth}\centering
        \raisebox{\dimexpr.5\ht\largestimage-.5\height}{
            \begin{tikzpicture}[scale=2] \tikzset{arrowr/.style={red,dotted},     arrowb/.style={blue, densely dashed},
    state/.style = {draw,fill,circle,inner sep = 1.2pt},
    state1/.style = {draw,rectangle, inner sep = 2.6pt}}

    \node[state1] (0) at (.5,.289) {};
    \node[state, label=left:{$i$}] (i) at (0,0) {};
    \node[state, label=right:{$j$}] (j) at (1,0) {};
    \node[state, label=right:{$k$}] (k) at (.5,.867) {};

    \path 
         (0) edge [arrowb] (i)
         (0) edge [arrowb] (k)
         (0) edge [arrowb] (j);
\end{tikzpicture}
        }
    \end{subfigure}
    \begin{subfigure}[t]{.45\textwidth}\centering
        \hspace*{-1.5cm}\usebox{\largestimage}
    \end{subfigure}
\caption{Replacement for $(i \vee j \vee k)$}\label{fig:3or}
\end{figure}}

After these replacements we obtain a constant size graph gadget. The above constraints define a face of $\COR(H)$ such that all clauses inside the crossover gadgets are satisfied.
Therefore, by the main property of the crossover gadget (see~\cite{lichtenstein1982planar} for a proof), $x_b = x_t$ and $x_\ell = x_r$ (see Figure~\ref{fig:gadget}).  

If $ij$ is a vertical or horizontal edge of $H$ (see the solid green edges in Figure~\ref{fig:grid}), then we set $x_i = x_j = x_{ij}$ (note that $x_i \leq x_{ij}$ and $x_j \leq x_{ij}$ are both valid inequalities). Together with the crossover gadgets, this ensures that if $i$ is a vertex on the bottom of $H$, then the value of $x_i$ propagates vertically. Similarly, if $j$ is a vertex on the left of $H$ then the value of $x_j$ propagates horizontally.  It follows that by projecting onto the diagonal dotted edges in Figure~\ref{fig:grid}, 
the face of $\COR(H)$ described above projects to $\COR(K_{h,h})$. 
Finally, since $K_{h}$ is a minor of $K_{h,h}$, we have $\xc(\COR(K_{h})) \le \xc(\COR(K_{h,h}))$.
In~\cite{KW15}, it is shown that $\xc(\COR (K_{h})) \geqslant (1.5)^h$. (A weaker exponential bound was given earlier in~\cite{fiorini2011exponential}.) Since $h = \Omega(\tw(G))$, this completes the proof.
\end{proof}

\begin{proof}[Proof of Theorem~\ref{lowerbound}]
Let $c'$ be the constant from Theorem~\ref{thm:lowerbound_new} and 
set $c$ to be equal to $\min\{c'/2,1/2\}$. By taking the geometric mean of the bounds in Observation~\ref{dimension} and Theorem~\ref{thm:lowerbound_new} we obtain Theorem~\ref{lowerbound}.
\end{proof}

\subsection*{Acknowledgements} We thank Mark Rowland and Adrian Weller for interesting discussions on the connection between the correlation polytope and graph minors.  We also thank the referee for their careful reading of the paper.  This project is supported by ERC grant \emph{FOREFRONT} (grant agreement no. 615640) funded by the European Research Council under the EU's 7th Framework Programme (FP7/2007-2013).

\bibliographystyle{amsplain}
\bibliography{main.bib}

\end{document}